\theoremstyle{plain}
\newtheorem{theorem}{Theorem}
\newtheorem{lemma}[theorem]{Lemma}
\newtheorem{corollary}[theorem]{Corollary}
\begin{document}
\title{Congestion Games with Mixed Objectives\thanks{This work was partially supported by the German Research Foundation (DFG) within the Collaborative Research Centre ``On-The-Fly Computing'' (SFB 901) and by the EU within FET project MULTIPLEX under contract no.\ 317532.\newline\newline The final publication is available at Springer via http://dx.doi.org/10.1007/978-3-319-48749-6\_47.}}
%
%
\author{Matthias Feldotto \and Lennart Leder \and Alexander Skopalik \\\\
	Heinz Nixdorf Institute \& Department of Computer Science\\
	 Paderborn University, Germany\\\\
\{feldi,lleder,skopalik\}@mail.upb.de
	}
%
%
%

\maketitle              

\begin{abstract} We study a new class of games which generalizes congestion games and its bottleneck variant. We introduce {\em congestion games with mixed objectives} to model network scenarios in which players seek to optimize for latency and bandwidths alike. We characterize the existence of pure Nash equilibria (PNE) and the convergence of improvement dynamics. For games that do not possess PNE we give bounds on the approximation ratio of approximate pure Nash equilibria.

\end{abstract}

\section{Introduction}

Resource allocation problems in large-scale scenarios such as networks often cannot be solved as a single optimization problem. The size of the problem, the distributed nature of information, or control preclude a centralized approach.
As a consequence, decisions are delegated to local actors or players. This gives rise to strategic behavior as these players often have economic interests. Game theory has studied the effect of such strategic interaction in various models of resource allocation and scheduling. One of the most prominent ones is the class of atomic congestion games~\cite{RO73}, in which players allocate sets of resources. The cost of a resource depends on the number of players allocating it. The cost of a player is the sum of the costs of her allocated resources. The appeal of this model stems not only from its applicability to prominent problems like scheduling, routing and load balancing, but also from desirable game theoretic properties. Congestion games always possess pure Nash equilibria and the natural improvement dynamics converge to a pure Nash equilibrium since these games are potential games. In fact, the class of congestion games is isomorphic to the class of potential games~\cite{MS96}, which shows their expressiveness. When modeling network routing with congestion games and most of its variants like weighted~\cite{FO05} or player-specific congestion games~\cite{MI96} one faces deficiency due to the nature of the players' cost functions. As a player's cost is determined by the sum of the resource costs, congestion games are not well suited to model effects like bandwidth allocation, as here the cost of a player is determined only by the bottleneck resource. Hence, Banner and Orda~\cite{BA07} introduced bottleneck congestion games where the cost of a player is the maximum cost of her chosen resources. Again, due to the nature of the cost functions, this class of games and most of its variants~\cite{HA09,HH14} only model the bottleneck effects and are unable to describe latency effects.
It is not difficult to envision scenarios in which both effects, latency and bandwidth, are relevant to decision makers - especially in today's IT infrastructures where we find techniques with shared resources, for example in the context of cloud computing or in software-defined networking. Many users with lots of different applications and therefore different objectives interact in one network and compete for the same resources. Consider, for example, on the one hand media streaming and on the other hand video gaming. In one application, bandwidth is the most important property, in the other it is latency.\\
Therefore, we study a game theoretic model in which players may have heterogeneous objectives. We introduce the model of \emph{congestion games with mixed objectives}. In this model, resources have two types of costs, latency cost and bottleneck cost, where the latter corresponds to the inverse of bandwidth. The players' costs may depend on both types of cost, where we allow different players to have different preferences regarding the two cost types.

{\bf \noindent Our Contribution.}
We show that pure Nash equilibria exist and can be computed in polynomial time in singleton games and in some matroid games. However, we show that the matroid property alone is not sufficient for the existence. Additionally, it is necessary that either the players are only interested in latency or bottleneck cost, or that the cost functions have a monotone dependence. For the latter case, we show convergence of best-response dynamics while the remaining cases are only weakly acyclic.
For matroid games that do not satisfy one of the additional properties, we show that pure equilibria might not exist and it is even {\sf NP}-hard to decide whether one exists.
To overcome these non-existence results, we consider approximate pure Nash equilibria. For several classes, we can show that there exist $\beta$-approximate pure Nash equilibria where $\beta$ depends on the size of the largest strategy.

{\bf \noindent Related Work.}
Milchtaich~\cite{MI96} studies the concept of player-specific congestion games and shows that in the singleton case these games always admit pure Nash equilibria. Ackermann et al.~\cite{AR09} generalize these results to matroid strategy spaces and show that the result also holds for weighted congestion games. 
Furthermore, they point out that in a natural sense the matroid property is maximal for the guaranteed existence of pure Nash equilibria in player-specific and weighted congestion games.
Moreover, Milchtaich~\cite{MI96} examines congestion games in which players are both weighted and have player-specific cost functions. By constructing a game with three players, he shows that these games do not necessarily possess pure Nash equilibria, even in the case of singleton strategies.\\
Mavronicalas et al.~\cite{MA07} study a special case of these games in which cost functions are not entirely player-specific. Instead, the player-specific resource costs are derived by combining the general resource cost function and a player-specific constant via a specified operation (e.\,g. addition or multiplication). They show that this restriction is sufficient to guarantee the existence of pure Nash equilibria in games with three players.
Dunkel and Schulz~\cite{DS06} show that the decision problem of whether a weighted network congestion game possesses a pure Nash equilibrium is NP-hard. The equivalent result is achieved for player-specific congestion games by Ackermann and Skopalik~\cite{AS08}.\\
Banner and Orda~\cite{BA07} study the applicability of game-theoretic concepts in network routing scenarios. In particular, they derive bounds on the price of anarchy in network bottleneck congestion games with restricted cost functions and show that a pure Nash equilibrium which is socially optimal always exists.
Cole et al.~\cite{CDR12} further investigate the non-atomic case, they especially consider the impacts of variable traffic rates.
In contrast, Harks et al.~\cite{HA09} concentrate on the atomic case and study the  lexicographical improvement property, which guarantees the existence of pure Nash equilibria through a potential function argument. They show that bottleneck congestion games fulfill this property and, hence, they are potential games.
Harks et al.~\cite{HH13} consider the complexity of computing pure Nash equilibria and strong equilibria in bottleneck congestion games. Moreover, they show this property in matroid bottleneck congestion games.  \\
Chien and Sinclair~\cite{CS07} study the convergence towards approximate pure Nash equilibria in symmetric congestion games. Skopalik and Vöcking~\cite{SK08} show inapproximability in asymmetric congestion games, which is complemented by approximation algorithms for linear and polynomial delay functions~\cite{CA11,FG14}, even for weighted games~\cite{CA12}. Hansknecht et al.~\cite{HK14} use the concept of approximate potential functions to examine of approximate pure Nash equilibria in weighted congestion games under different restrictions on the cost functions.

{\bf \noindent Preliminaries}
A {\em congestion game with mixed objectives} is defined by a tuple $\Gamma= (N,R,\left(\Sigma_i\right)_{i\in N},\left(\alpha_i\right)_{i\in N},\left(\ell_r\right)_{r\in R}, \left(e_r\right)_{r\in R})$, where  $N=\left\{1,\dots, n\right\}$ denotes the set of players and $R$ denotes the set of resources. For each player $i$ let $\Sigma_i \subseteq 2^R$ denote the strategy space of player $i$ and $\alpha_i\in \left [0,1\right ]$ the preference value of player $i$. For each resource $r$ let $\ell_r : N \to \mathbb{R}$ denote the non-decreasing latency cost function associated to resource $r$, and let $e_r : N \to \mathbb{R}$ denote the non-decreasing bottleneck cost function associated to resource $r$. 
For a state $S=\left (S_1,\dots, S_n\right ) \in \Sigma_1 \times \ldots \times \Sigma_n$, we define for each resource $r\in R$ by $n_r(S)=\left|\left \{i\in N~|~r\in S_i\right \}\right|$ the congestion of $r$. The latency cost of $r$ in state $S$ is given by $\ell_r(S)=\ell_r(n_r(S))$, and the bottleneck cost by $e_r(S)=e_r(n_r(S))$. The total cost of player $i$ in state $S$ depends on $\alpha_i$ and is defined as $c_i(S)=\alpha_i\cdot \sum_{r\in S_i}\ell_r(S)+(1-\alpha_i)\cdot\max_{r\in S_i}e_r(S)$.

For a state $S=(S_1,...,S_i,..., S_n)$, we denote by $\left(S_{i}',S_{-i}\right)$ the state that is reached if player $i$ plays strategy $S_i'$ while all other strategies remain unchanged. 
A state $S=\left (S_1,\dots, S_n\right)$ is called a {\em pure Nash equilibrium (PNE)} if for all $i\in N$ and all $S'_i\in \Sigma_i$ it holds that $c_i(S)\leq  c_i(S'_i,S_{-i})$ and a {\em $\beta$-approximate pure Nash equilibrium} for a $\beta \ge 1$, if for all $i\in N$ and all $S'_i\in \Sigma_i$ it holds that $c_i(S)\leq \beta\cdot c_i(S'_i,S_{-i})$.

A {\em singleton congestion game with mixed objectives} is a congestion game with mixed objectives $\Gamma$ with the additional restriction that all strategies are singletons, i.\,e., for all $i\in N$ and all $S_i\in \Sigma_i$ we have that $|S_i|=1$.
A {\em matroid congestion game with mixed objectives} is a congestion game with mixed objectives in which the strategy spaces of all players form the bases of a matroid on the set of resources.
We say that the cost functions of a congestion game with mixed objectives have a {\em monotone dependence} if  there is a monotone non-decreasing function $f:\mathbb{R}\to \mathbb{R}$, such that $e_r(x)=f(\ell_r(x))$ for all $r\in R$. We call the players {\em $\alpha$-uniform} if there is an $\alpha \in [0,1]$ such that $\alpha_i=\alpha$ for all players $i\in N$. We say the players have {\em pure preferences} if $\alpha_i \in \{0,1\}$ for all players $i\in N$.

\section{Existence of Pure Nash Equilibria}
\label{existence}

Congestion games with mixed objectives are more expressive than standard or bottleneck congestion games. Consequently, the existence of pure Nash equilibria is 
guaranteed only for special cases. Unlike, e.\,g., player-specific congestion games, the matroid property is not sufficient for the existence of PNE. We show that we have the existence of PNE in singleton games or for matroid games with players that have pure preferences or cost functions that have a monotone dependence.

\begin{theorem}
	\label{Mat01}
	A congestion game with mixed objectives  $\Gamma$ contains a pure Nash equilibrium  if  $\Gamma$ is a
	\begin{enumerate}
		\item singleton congestion game, or
		\item matroid congestion game and the players have pure preferences, or
		\item matroid congestion game and the cost functions have a monotone dependence.
	\end{enumerate}
	A pure Nash equilibrium can be computed in polynomial time.
\end{theorem}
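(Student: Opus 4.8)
The plan is to handle the three cases with different machinery, since the statement already signals that only the monotone-dependence case should admit a true potential, whereas the other two will only be weakly acyclic. For the \emph{singleton} case I would first observe that when every $S_i$ is a single resource $\{r_i\}$, the sum and the maximum over $S_i$ collapse into one term, so $c_i(S)=\alpha_i\,\ell_{r_i}(n_{r_i}(S))+(1-\alpha_i)\,e_{r_i}(n_{r_i}(S))$. Since $\ell_r$ and $e_r$ are non-decreasing and $\alpha_i\in[0,1]$, the map $x\mapsto \alpha_i\,\ell_r(x)+(1-\alpha_i)\,e_r(x)$ is a non-decreasing, player-specific resource cost. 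Hence $\Gamma$ is exactly a player-specific singleton congestion game, and I would invoke Milchtaich's result~\cite{MI96} to get both existence of a PNE and a polynomial-time algorithm (these games are weakly acyclic, and a PNE is produced by his insertion/best-response procedure).

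For the monotone-dependence matroid case I would construct a lexicographic potential on the latency slot-costs. Because $e_r(x)=f(\ell_r(x))$ with $f$ non-decreasing, $f$ commutes with $\max$, so $\max_{r\in S_i}e_r(S)=f(\max_{r\in S_i}\ell_r(S))$ and both objectives are governed by the latency values. I would take $\Phi(S)$ to be the multiset $\{\ell_r(k): r\in R,\ 1\le k\le n_r(S)\}$ written as a non-increasing vector and compared lexicographically. The key lemma is that every single-resource exchange (remove $r\in S_i$, add $r'$) that strictly improves $c_i$ satisfies $\ell_{r'}(n_{r'}(S)+1)<\ell_r(n_r(S))$. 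I would prove this by contraposition: if $\ell_{r'}(n_{r'}+1)\ge \ell_r(n_r)$ then the latency sum over $S_i$ does not decrease and, by monotonicity of $f$, $e_{r'}(n_{r'}+1)\ge e_r(n_r)$, so the bottleneck maximum cannot decrease either and $c_i$ cannot strictly improve. An exchange with $\ell_{r'}(n_{r'}+1)<\ell_r(n_r)$ deletes a larger entry from the sorted vector and inserts a smaller one, so $\Phi$ strictly decreases. Combined with the matroid exchange property, which I would use to show that a profile admitting any improving deviation admits an improving single exchange (so that a profile with no improving single exchange is a PNE), this yields the finite improvement property for lazy best responses, hence convergence and existence; polynomial-time computation then follows from the Ackermann--R\"oglin--V\"ocking bound on lazy best-response dynamics in matroid games~\cite{AR09}.

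The pure-preference matroid case is where I expect the real difficulty, and where a global potential must fail. Here players split into latency players ($\alpha_i=1$), whose cost Rosenthal's potential $P(S)=\sum_r\sum_{k\le n_r(S)}\ell_r(k)$ tracks exactly, and bottleneck players ($\alpha_i=0$), for whom the sorted bottleneck-cost vector $B(S)$ decreases on every improving move by the lexicographical improvement property for matroid bottleneck games~\cite{HA09,HH13}. The natural attempt $(B,P)$ breaks down: a latency player ignores bottleneck cost, so an improving move lowering her latency sum may raise the congestion, and thus the bottleneck cost, on a resource she enters, strictly increasing $B$; symmetrically $(P,B)$ breaks for bottleneck players. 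This coupling of the two player types through the shared congestion is precisely why no generalized ordinal potential can exist and the game is only weakly acyclic. I would therefore prove existence constructively rather than through one monotone function: drive the bottleneck players to a best-response-stable configuration (each of their moves strictly lowering $B$), then insert and re-optimize the latency players using the matroid structure, and argue by induction on the number of players that the resulting sequence of better responses can be \emph{steered} to a sink of the improvement graph, i.e.\ to a PNE. The hard part will be showing that this steered interleaving terminates --- that bottleneck moves (lowering $B$) and latency moves (possibly raising $B$ but lowering $P$) can be ordered without cycling --- and it is exactly there that the strong basis-exchange property must be exploited for both objectives simultaneously.
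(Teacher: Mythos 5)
Your singleton case is exactly the paper's argument (collapse the sum and max into a single player-specific cost $\alpha_i\ell_r+(1-\alpha_i)e_r$ and invoke Milchtaich), and that part is fine. The problem is that the other two cases hinge on one structural fact that you never isolate, and in its absence your case~2 is not a proof at all. The paper's key lemma is: \emph{in a matroid, a basis that minimizes the sum of the element costs also minimizes the maximum element cost} (proved by a single basis exchange: if $B'$ had a strictly smaller maximum, one could swap $B$'s most expensive element for some $b_i'\in B'$ and strictly lower the sum). With this lemma, pure preferences are easy: treat each bottleneck player ($\alpha_i=0$) as if she minimized $\sum_{r\in S_i}e_r$, which yields a player-specific \emph{matroid} congestion game (cost $\ell_r$ for $\alpha_i=1$ players, $e_r$ for $\alpha_i=0$ players); a PNE exists and is computable in polynomial time by Ackermann--R\"oglin--V\"ocking, and by the lemma each bottleneck player's sum-optimal basis is also max-optimal, so it is a genuine best response in $\Gamma$. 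Your proposed ``steered interleaving'' for this case is exactly the part you admit you cannot finish, and your intuition that it is the hard case is wrong --- it reduces cleanly once you have the lemma. (Your observation that no pair potential $(B,P)$ works is correct but irrelevant to existence.)

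Your case~3 takes a genuinely different, convergence-based route (a lexicographic/rank potential on latency slot-costs, essentially the paper's separate convergence theorem), but it has a gap for players with $\alpha_i=0$: your claimed bridge ``a profile admitting an improving deviation admits an improving single exchange'' is false for the bottleneck objective. If player $i$'s basis contains two resources tied at the maximum latency and a disjoint basis is strictly cheaper, every single exchange leaves her maximum (hence her cost) unchanged, so no single exchange is \emph{strictly} improving for $c_i$ even though a full deviation is. The fix is again the lemma above: run exchanges that strictly decrease the latency \emph{sum} (these strictly decrease your lexicographic potential and, by monotone dependence, never increase the max), and at termination each basis is sum-optimal, hence max-optimal, hence a best response for every $\alpha_i\in[0,1]$. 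The paper avoids dynamics entirely here by observing that any PNE of the ordinary congestion game with costs $\ell_r$ (which exists by Rosenthal) is already a PNE of $\Gamma$. So: case~1 correct and identical, case~3 salvageable but with a real hole at $\alpha_i=0$, case~2 missing its essential idea.
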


\begin{proof}
	
	We prove the theorem by reducing the existence problem of a pure Nash equilibrium in a congestion game with mixed objectives to the existence problem of a PNE in a congestion game with player-specific cost functions.
	The existence of PNE is guaranteed in singleton~\cite{MI96} and matroid~\cite{AR09} player-specific congestion games and polynomial time complexity immediately follows~\cite{AC08,AR09}. 
	
	We will utilize the following lemma which states that an optimal basis with respect to sum costs is also optimal w.\,r.\,t. maximum costs.

	\begin{lemma}
		\label{Mat-Lemma}
		Let $M$ be a matroid, and let $B=\{b_1,\dots,b_m\}$ be a basis of $M$ which minimizes the sum of the element costs. Then for any other basis $B'=\{b_1',\dots,b_m'\}$ it holds that $\max_{1\leq i \leq m}b_i \leq \max_{1\leq i \leq m}b_i'$.
	\end{lemma}
	\begin{proof}[Lemma]
		Let $B=\{b_1,\dots,b_m\}$ be an optimal basis, and assume by contradiction that there is a different basis $B'=\{b_1',\dots,b_m'\}$ with $b_m'<b_m$ (w.\,l.\,o.\,g. assume that $b_m$ and $b_m'$ are the most expensive resources in $B$ and $B'$, respectively). Since $B$ and $B'$ are both bases and $b_m\notin B'$, there is an element $b_i'\in B'$ such that $B''=B\setminus \{b_m\}\cup \{b_i'\}$ is a basis of $M$. By assumption we have $b_i'\leq b_m'<b_m$, which implies that $B''$ has a smaller total cost than $B$. Therefore, $B$ cannot be optimal, which gives a contradiction.
		\qed
	\end{proof}
	
	We now proceed to prove Theorem~\ref{Mat01} and consider the three different cases:
	\begin{enumerate}
		\item  The cost of player $i$ in a state $S$ with $S_i=\{r\}$ is
		$c_i(S)=\alpha_i\cdot \ell_r(S)+\left(1-\alpha_i\right)\cdot e_r(S)$.
		By defining the player-specific cost functions $c_r^i(x)= \alpha_i\cdot \ell_r(x)+\left(1-\alpha_i\right)\cdot e_r(x)$ for every $i\in N$ and $r\in R$, we obtain an equivalent singleton player-specific congestion game.
		
		\item Using Lemma~\ref{Mat-Lemma}, we can treat all the players who strive to minimize their bottleneck costs as if they were striving to minimize the sum of the bottleneck costs of their resources. Hence, we can  construct a player-specific congestion game in which the player-specific cost functions correspond to the latency functions for those players with preference value 1, and to the bottleneck cost functions for those players with preference value 0.
		
		\item A player $i$ who allocates the resources $\{r_1,\dots,r_k\}$, where w.\,l.\,o.\,g. $r_k$ is the most expensive one, in a state $S$, incurs a total cost of
		$c_i(S)=\alpha_i\cdot \sum_{j=1}^k \ell_{r_j}(S)+\left(1-\alpha_i\right) \cdot e_{r_k}(S)\, .$ 
		Due to Lemma \ref{Mat-Lemma}, we know that $\ell_{r_k}(S)$ is minimized if $\sum_{j=1}^k \ell_{r_j}(S)$ is minimized. The monotonicity of $f$ with $e_r(x)=f(\ell_r(x))$ implies that $e_{r_k}$ is also minimized. Observe that monotonicity also ensures that $r_k$ is the bottleneck resource. Hence, a PNE of a congestion game with cost functions $\ell_r$ for every $r\in R$ is a PNE of $\Gamma$.	\qed\end{enumerate}
	
\end{proof}

We show that the matroid property is not sufficient for the existence of PNE. Even for linear cost functions and uniform players, there are games without PNE. 

\begin{theorem} 
	\label{alphaequal}
	There is a matroid congestion game with mixed objectives $\Gamma$ with linear cost functions and $\alpha$-uniform players which does not possess a pure Nash equilibrium.
\end{theorem}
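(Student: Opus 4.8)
The statement is an existence (counterexample) claim, so the plan is to \emph{construct} one explicit game and then verify the absence of a PNE by a finite case analysis. Before building anything, it is worth recording which features the example must have in order not to be ruled out by Theorem~\ref{Mat01}: the strategies cannot all be singletons, so the matroid should have bases of size at least two; the players must be $\alpha$-uniform with a common value $\alpha \in (0,1)$, since pure preferences would force existence; and the latency and bottleneck functions must \emph{not} exhibit monotone dependence. This last point is the conceptual key --- if the resources were ordered the same way by $\ell_r$ and by $e_r$, then by Lemma~\ref{Mat-Lemma} a latency-minimising basis would simultaneously minimise the bottleneck term, and the game would collapse to an ordinary matroid congestion game, which has a PNE. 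Hence the construction must pit the additive latency objective against the max bottleneck objective.

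Concretely, I would build the game on a small matroid with two-element bases; a partition matroid with two parts is the most transparent choice, since then each player picks one resource from each part and a player's cost splits as $\alpha\,(\ell_{x}+\ell_{y}) + (1-\alpha)\max(e_x,e_y)$. I would choose linear functions so that within each part one resource is cheap in latency but expensive in bottleneck and the other is the reverse, and fix an interior value such as $\alpha = \tfrac12$ so that both terms matter. The intended mechanism is that the unilateral deviation which lowers a player's bottleneck (max) term raises its latency (sum) term, and conversely; congestion then couples the players, so that a player's best response shifts depending on where the others sit. Tuning the slopes and intercepts of the linear functions, together with the congestion levels, should make these best responses chase one another.

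With the game fixed, the verification is a finite check: enumerate all states and show that in each one at least one player strictly improves by a unilateral deviation, which is most cleanly presented as a single improving cycle visiting the candidate equilibrium states. Because the strategy spaces are small (a constant number of bases per player), this is routine, if tedious.

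I expect the genuine difficulty to be the parameter design rather than the verification. Two forces must be balanced at once: the example has to defeat the potential-function argument underlying the monotone-dependence case, so that every cost-reducing deviation trades one objective off against the other rather than improving both; and, because the players are $\alpha$-uniform, there is no player-specific heterogeneity to exploit as in Milchtaich's weighted player-specific construction, so the cyclic improvement behaviour must be produced purely by the congestion externalities acting on the conflicting sum and max terms. Getting linear $\ell_r$ and $e_r$, a single $\alpha$, and a valid matroid to cooperate in forcing such a cycle --- while keeping the state space small enough to check by hand --- is the heart of the problem.
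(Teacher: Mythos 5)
Your plan correctly identifies every structural constraint the counterexample must satisfy --- non-singleton matroid bases, an interior common $\alpha$ (you even pick $\alpha=\tfrac12$, as the paper does), and cost functions whose latency and bottleneck orderings conflict so that Lemma~\ref{Mat-Lemma} cannot be invoked --- and the intended mechanism (a best-response cycle driven by the tension between the sum and max terms under congestion) is exactly the one the paper exploits. But the proposal stops precisely where the proof begins: no concrete resources, no concrete linear functions, and no verified improvement cycle are given. For an existence-by-counterexample theorem the explicit instance \emph{is} the proof, and you yourself flag the parameter design as ``the heart of the problem'' while leaving it unsolved; ``tuning the slopes \ldots should make these best responses chase one another'' is a hope, not an argument. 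As it stands there is nothing to check, so the proof is incomplete.

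For comparison, the paper's construction is not the symmetric two-part partition matroid you sketch. It uses two players on uniform matroids of different ranks: player~1 chooses any $3$ of $\{r_1,\dots,r_6\}$ and player~2 any $2$ of $\{r_4,\dots,r_7\}$, with $\alpha_i=0.5$. The functions are arranged so that $\{r_1,r_2,r_3\}$ (zero latency, high bottleneck) strictly dominates all of player~1's strategies except $\{r_4,r_5,r_6\}$, and similarly player~2 is reduced to $\{r_4,r_5\}$ versus $\{r_6,r_7\}$; the resulting $2\times 2$ game is then checked to cycle through all four states. Whether your simpler one-resource-per-part layout can be made to work with a single $\alpha$ and genuinely linear functions is not obvious --- the paper evidently needed seven resources, asymmetric ranks, and a domination argument to shrink the state space --- so you would need to either produce and verify your own numbers or adopt a construction of this kind before the claim is established.
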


\begin{proof}
	
	We construct a two-player game with linear cost functions and $\alpha_i=0.5$ for both players.	
	The set of resources is $R=\left \{r_1,\dots,r_7\right\}$. The strategies for player $1$ are $\Sigma_1=\left\{\left\{r_i,r_j,r_k\right\}~|~i,j,k\in\{1,\dots,6\}\right\}$ and the strategies for player $2$ are $\Sigma_2=\left\{\left\{r_i,r_j\right\}~|~i,j\in\{4,\dots,7\}\right\}$.
	The latency and bottleneck cost functions for the first three resources are $\ell_{r_1}(x)=\ell_{r_2}(x)=\ell_{r_3}(x)=0$ and $e_{r_1}(x)=e_{r_2}(x)=e_{r_3}(x)=200\cdot x$, respectively. 
	For resource $r_4$ and $r_5$ the cost functions are $\ell_{r_4}(x)=\ell_{r_5}(x)=20\cdot x$ and $e_{r_4}(x)=e_{r_5}(x)=50\cdot x$. For resource $r_6$ the cost functions are $\ell_{r_6}(x)=8\cdot x$ and $e_{r_6}(x)=80\cdot x$. For resource $r_7$ the cost functions are $\ell_{r_7}(x)=0$ and $ e_{r_7}(x)=160\cdot x$.
	
	We note that for player 1 only the strategies $S_{1,1}:=\{r_1,r_2,r_3\}$ and $S_{1,2}:=\{r_4,r_5,r_6\}$ can be best-response strategies in any state, since $\{r_1,r_2,r_3\}$ strictly dominates all remaining strategies.
	Hence, with respect to the existence of pure Nash equilibria, we can restrict player 1 to these two strategies. In the analogous way, we can restrict player 2 to the strategies $S_{2,1}:=\{r_4,r_5\}$ and $S_{2,2}:=\{r_6,r_7\}$. This yields a game with only four states and, as we can easily verify, a best-response improvement step sequence starting from any of these states runs in cycles:
	$$\begin{pmatrix}100 && 45\\S_{1,1}&&S_{2,1}\end{pmatrix} \xrightarrow{1} \begin{pmatrix}94 && 90\\S_{1,2}&&S_{2,1}\end{pmatrix} \xrightarrow{2} \begin{pmatrix}108 && 88\\S_{1,2}&&S_{2,2}\end{pmatrix}$$
	$$\xrightarrow{1} \begin{pmatrix}100 && 84\\S_{1,1}&&S_{2,2}\end{pmatrix} \xrightarrow{2} \begin{pmatrix}100 && 45\\S_{1,1}&&S_{2,1}\end{pmatrix}$$	
	The numbers above the strategies give the costs of the respective player in the described state, and the numbers on the arrows indicate which player changes her strategy from one state to the next one.
	\qed
\end{proof}

Note that the nature of the existence proofs of Theorem~\ref{Mat01} implies that a PNE can be computed in polynomial time. However, if existence is not guaranteed, the decision problem whether a matroid game has a pure Nash equilibrium is {\sf NP}-hard:

\begin{theorem}
	\label{matroidcomp}
	It is {\sf NP}-hard to decide whether a matroid congestion game with mixed objectives  possesses at least one pure Nash equilibrium even if the players are $\alpha$-uniform.
\end{theorem}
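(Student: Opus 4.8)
The plan is to establish NP-hardness by a reduction from a known NP-complete problem, most naturally from a satisfiability or constraint-satisfaction problem, building on the non-existence gadget already constructed in the proof of Theorem~\ref{alphaequal}. The key insight from Theorem~\ref{alphaequal} is that a small matroid game with mixed objectives can be engineered so that best-response dynamics cycle and no PNE exists; the idea would be to make the existence of a PNE in a larger game contingent on the satisfiability of a logical formula, so that deciding PNE-existence solves the NP-complete instance.

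\textbf{First I would} fix the source problem — a convenient choice is \textsf{3SAT} or a variant such as \textsf{Circuit-SAT}, though given the player-specific machinery, adapting the construction of Ackermann and Skopalik~\cite{AS08}, who prove the analogous hardness for player-specific congestion games, is the most promising route. The plan is to design, for each clause and each variable, matroid gadgets whose players encode truth assignments: variable-players choose bases that correspond to setting a variable true or false, and clause-gadgets are built so that a consistent, satisfying assignment corresponds exactly to a stable configuration. Crucially, I would embed a copy of the cycling gadget from Theorem~\ref{alphaequal} that is ``activated'' precisely when the encoded formula is \emph{unsatisfiable}: if the clause-players cannot all be simultaneously satisfied, some gadget is forced into the no-PNE cycle, whereas a satisfying assignment lets every gadget settle into a locally stable state, yielding a global PNE.

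\textbf{The main steps in order} would be: (i) define the resources, the matroid structure on each player's strategy space (verifying the exchange/base-exchange property so the game is genuinely a matroid game), and the linear latency and bottleneck cost functions; (ii) set a single uniform preference value $\alpha$ for all players so the hardness holds even in the $\alpha$-uniform case, tuning the latency-versus-bottleneck tradeoff so that each player's best response behaves as the gadget requires; (iii) prove the forward direction — a satisfying assignment yields a PNE — by exhibiting the stable state explicitly and checking no player can improve; and (iv) prove the reverse direction — any PNE induces a satisfying assignment — by arguing that in any equilibrium the variable-players' choices are globally consistent and satisfy every clause, since otherwise some gadget would contain a profitable deviation.

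\textbf{The hard part will be} the simultaneous calibration of the cost functions across all gadgets: each player's decision in a mixed-objective game depends on both the sum of latencies and the single bottleneck resource via $c_i(S)=\alpha\cdot\sum_{r\in S_i}\ell_r(S)+(1-\alpha)\cdot\max_{r\in S_i}e_r(S)$, so I must ensure the thresholds at which a variable- or clause-player prefers one base over another are robust to the congestion contributed by players in neighboring gadgets. Controlling this cross-gadget interference — guaranteeing that the only feedback between gadgets flows through the intended shared ``assignment'' resources and not through unintended bottleneck couplings — is the delicate technical obstacle, and it is where the specific numeric values (as in the $200,50,80,160$ choices of Theorem~\ref{alphaequal}) would need to be chosen with care so that the equilibrium-existence question faithfully mirrors satisfiability.
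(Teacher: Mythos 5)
Your high-level architecture matches the paper's: embed the cycling gadget of Theorem~\ref{alphaequal} into a larger game so that it is ``activated'' exactly when the source instance is a no-instance, and let the rest of the game stabilize otherwise. However, what you have written is a plan rather than a proof, and for an NP-hardness result the construction \emph{is} the proof. You name the source problem only tentatively, define no gadgets, no matroid strategy spaces, and no cost functions, and you explicitly defer the calibration that you yourself identify as the crux. Two concrete ideas are missing. First, it does not suffice to ``embed a copy'' of the Theorem~\ref{alphaequal} gadget: that gadget never has a PNE, so a verbatim copy would make \emph{every} instance a no-instance. You need a modified gadget that possesses a PNE when inactive and cycles when active; the paper achieves this by changing $e_{r_7}$ from $160\cdot x$ to $80\cdot x$ and adding a connection player whose presence on $r_7$ restores the original congested costs. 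Second, you need a mechanism by which the global combinatorial property (satisfiability, in your version) is aggregated into a single activation signal; the paper does this with a threshold cost function on one shared resource $r_c$ ($e_{r_c}$ jumps once more than $n-k$ players allocate it), which is what lets the connection player ``count.'' Without specifying such a counting/threshold device, the claim that unsatisfiability ``forces'' activation in every candidate equilibrium is unsupported.

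On the choice of source problem: the paper reduces from Independent Set rather than 3SAT, and this choice does real work. A vertex conflict (two adjacent vertices both selected) maps directly onto two players congesting a shared edge resource, so no clause gadgets or consistency gadgets are needed, and each vertex player's strategy space is simply all $d(v_i)$-subsets of a $2d(v_i)$-element ground set --- a uniform matroid, so the matroid property is immediate. Your 3SAT route would force you to build variable gadgets with two consistent ``sides'' and clause gadgets checking disjunctions, all while keeping every strategy space the basis set of a matroid; this is substantially more delicate and you give no indication of how the exchange property would be preserved. If you pursue this proof, I would recommend switching to Independent Set (or carrying out the 3SAT gadgetry in full and verifying the matroid axioms explicitly).
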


\begin{proof}
	We reduce from Independent Set (IS), which is known to be {\sf NP}-complete \cite{GJ02}. Let the graph $G=(V,E)$ and $k\in \mathbb{N}$ be an instance of IS. We construct a matroid congestion game $\Gamma$ that has a pure Nash equilibrium if and only if $G$ has an independent set of size at least $k$.
	
	We begin by describing the structure of $\Gamma$. The game contains the following groups of players:
	\begin{itemize}
		\item For each node $v\in V$, there is one player who can allocate all edges adjacent to $v$, but has a profitable deviation to a special strategy if and only if a player of a neighboring node allocates one of the adjacent edges.
		\item There are two players who play a game that is equivalent to the game in the proof of Theorem~\ref{alphaequal} if an additional player allocates a certain resource $r_7$, but possesses a PNE otherwise.
		\item Finally, there is one connection player for whom it is profitable to allocate $r_7$ if and only if at least $n-k+1$ of the node players deviate from their ``edge strategy''.
	\end{itemize}
	
	Clearly, if we can achieve this dynamic, the existence of a PNE in $\Gamma$ is equivalent to the existence of an independent set in $G$. 
	Let $V=\{v_1,\dots,v_n\}$, and for every $v_i\in V$ we denote by $E_{v_i}=\{e\in E~|~v_i\in e\}$ the set of edges adjacent to $v_i$, and by $d(v_i)=|E_{v_i}|$ the degree of $v_i$ in $G$. Let $d=\max_{v\in V} d(v)$ be the maximum degree in $G$. We can assume that $d\geq 2$.
	
	We now give a formal definition of $\Gamma=(N,R,\left(\Sigma_i\right)_{i\in N},\left(\alpha_i\right)_{i\in N}, \left(\ell_r\right)_{r\in R}, \linebreak \left(e_r\right)_{r\in R})$.
	The set of players is $N=\left\{v_1,\dots,v_n,c,1,2\right\}$ and the set of resources is $R=\{r_e~|~e\in E\} \cup \{r_i^j \mid i \in \{1,\ldots,n\}, j \in \{1,\ldots,d(v_i)-1\}\}\cup \{r_c,r_1,\dots,r_7\}$.
	The strategies for the vertex players $v_i$ are all subsets of size $d(v_i)$ from a set consisting of the adjacent edge resources, some alternative resources, and resource $r_c$:  
	$\Sigma_{v_i}=\left\{\{X~|~ X\subseteq \left(\{r_e~|~e\in E_{v_i}\} \cup \{q_i^1,\dots,q_i^{d(v_i)-1},r_c\}\right)\text{ and } |X|=d(v_i)\right\}$. 
	
	Our 
	choice of cost functions will ensure that in an equilibrium every vertex player $v_i$ either allocates all  resources $r_e$ that belong to her adjacent edges $e \in E_{v_i}$ or the resources $q_i$ and $r_c$.
	The two strategies of the connection player are $\Sigma_c=\{\{r_c\},\{r_7\}\}$. Finally, there are the players $1$ and $2$ with strategies 
	$\Sigma_1=\left\{\left\{r_i,r_j,r_k\right\} \mid i,j,k\in\{1,\dots,6\}\right\}$ and$, \Sigma_2=\left\{\left\{r_i,r_j\right\}~|~i,j\in\{4,\dots,7\}\right\}$, respectively.
	
	The cost functions for the edge resources are
	$\ell_{r_e}(x)=1000 \cdot x$ and $e_{r_e}(x)=0$ for all $e\in E$, for the alternative resources
	$\ell_{q_i^j}(x)=0$ and $e_{q_i^j}(x)= 1000 \cdot d(v_i) + 1$ for all $1\le i \le n$ and $1 \le j \le d$, and for the connection resource
	$\ell_{r_c}(x)=0$ and $e_{r_c}(x)=0$ for $x \le n-k+1$, $e_{r_c}(x)=1000$ for $x >n-k +1$.
	The cost functions of the resources $r_1,\dots,r_7$ are $\ell_{r_1}(x)=\ell_{r_2}(x)=\ell_{r_3}(x)=0, e_{r_1}(x)=e_{r_2}(x)=e_{r_3}(x)=200\cdot x,
	\ell_{r_4}(x)=\ell_{r_5}(x)=20\cdot x,\ e_{r_4}(x)=e_{r_5}(x)=50\cdot x,   \ell_{r_6}(x)=8\cdot x,\ e_{r_6}(x)=80\cdot x,\ \ell_{r_7}(x)=0,\ e_{r_7}(x)=80\cdot x$.
	We choose the value $\alpha_i=0.5$ for all players.
	
	It remains to show that this game has a pure Nash equilibrium if and only if $G$ has an independent set of size $k$.
	If there is an independent set, we can construct an equilibrium as follows: Each node player that corresponds to a node in the independent set chooses the strategy that contains all her edge resources. Each remaining node player chooses a strategy that contains only her $q$-resources and the resource $r_c$. The connection player chooses resource $r_c$. Player $1$ chooses $\{r_1,r_2,r_3\}$ and player $2$ chooses $\{r_6,r_7\}$.  It is easy to verify that this is indeed a pure Nash equilibrium.
	
	If there is no independent set of size at least $k$, we argue that in an equilibrium there are more than $n-k$ of the node players on resource $r_c$. Observe that for a node player that allocates at least one of the $q$-resources it is the best response to allocate the remaining $q$-resources and the resource $r_c$ for no additional cost. Furthermore, if two node players allocate the same edge resource, their best response is to choose the $q$-resources and $r_c$.
	Hence, the best response of the connection player is $\{r_7\}$ and players $1$ and $2$ will play the subgame defined in Theorem~\ref{alphaequal} that does not have a pure Nash equilibrium. 
	\qed
\end{proof}

In Theorem~\ref{Mat01} we characterized restrictions on the preference values and cost functions that guarantee the existence of PNE in congestion games with mixed objectives, when combined with the matroid property of strategy spaces. The following theorem shows that the matroid property is necessary even if we impose the additional constraint that bottleneck and latency cost functions are identical and linear.

\begin{theorem}
	\label{Non-Mat}
	There exists a congestion game with mixed objectives 
	with linear cost functions $\ell_r=e_r$ for all resources $r\in R$, and either
	\begin{enumerate}
		\item pure preferences, or
		\item $\alpha$-uniform players 
	\end{enumerate} which does not possess a pure Nash equilibrium.
\end{theorem}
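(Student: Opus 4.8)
The plan is to prove each of the two statements by exhibiting an explicit finite game with the required properties together with a cyclic best-response sequence that visits every candidate equilibrium, so that no state can be a PNE. The guiding principle is to exploit precisely the property that fails once the matroid assumption of Theorem~\ref{Mat01} is dropped: by Lemma~\ref{Mat-Lemma}, in a matroid a sum-minimizing basis is simultaneously a max-minimizing basis, whereas in a general strategy space the strategy minimizing $\sum_{r\in S_i}\ell_r$ and the one minimizing $\max_{r\in S_i}e_r$ may differ. Since here $\ell_r=e_r$, a single player's latency (sum) view and bottleneck (max) view of the very same resource costs genuinely disagree over non-matroid strategies, and this tension is what I would turn into a best-response cycle.

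For the first case I would build a game with pure preferences, i.e.\ with at least one ``latency'' player ($\alpha=1$, minimizing the sum of her resource costs) and at least one ``bottleneck'' player ($\alpha=0$, minimizing the maximum). A game consisting of a single type is harmless: with all $\alpha_i=1$ it is a standard congestion game and with all $\alpha_i=0$ it is a bottleneck congestion game, and both admit a PNE in arbitrary strategy spaces. Hence the construction must mix the two aggregation rules on shared resources. To obtain the necessary non-matroid freedom, I would give at least one player two disjoint equal-sized strategies, e.g.\ $\{a,b\}$ and $\{c,d\}$; such a pair violates the base-exchange property and is therefore not the base family of a matroid.

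For the second case I would keep all players $\alpha$-uniform with a common $\alpha\in(0,1)$, so that each player trades off $\alpha\sum_{r\in S_i}\ell_r$ against $(1-\alpha)\max_{r\in S_i}e_r$ with $\ell_r=e_r$. Again the non-matroid strategy spaces make the sum-optimal and max-optimal choices diverge, and I would tune the (linear, identical) slopes so that the induced trade-off reproduces a cyclic preference pattern, analogous to the four-state cycle of Theorem~\ref{alphaequal} but now under the extra constraint $\ell_r=e_r$. As there, I would first argue that only a constant number of strategies per player can ever be a best response (the rest being strictly dominated), reducing the analysis to a small product of states, and then display an explicit improvement cycle $S^{(1)}\to S^{(2)}\to\cdots\to S^{(1)}$ in which the deviating player's cost strictly decreases at every step.

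The main obstacle is the constraint $\ell_r=e_r$: it removes the degree of freedom used in Theorem~\ref{alphaequal}, where the latency and bottleneck slopes were chosen independently. A further subtlety is that the congestion externality on a shared resource is inherently anti-coordinating, since every player prefers to avoid the others, which by itself tends to produce equilibria rather than cycles. The construction must therefore counteract this by combining the difference between summation and maximization with asymmetric base costs arising from the unshared resources, so that the net best-response map becomes cyclic. Once suitable slopes and strategy sets are fixed, verifying that each of the finitely many candidate states admits a profitable deviation is a routine finite computation.
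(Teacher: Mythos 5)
Your high-level plan coincides with the paper's: for each case, exhibit a small two-player game with non-matroid strategy spaces in which the sum-view and the max-view of the same costs $\ell_r=e_r$ pull in opposite directions, and verify a best-response cycle (or, for the second case, a two-state mutual-deviation pattern) over the handful of undominated states. Your observation that mixing the two aggregation rules is unavoidable --- since all-latency games and all-bottleneck games always admit PNE --- is correct and matches the motivation behind the paper's constructions.

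However, there is a genuine gap: for an existence statement of this kind, the witness \emph{is} the proof, and you never produce one. Everything load-bearing is deferred to ``I would tune the (linear, identical) slopes so that the induced trade-off reproduces a cyclic preference pattern'' and ``once suitable slopes and strategy sets are fixed, verifying \dots is a routine finite computation.'' You yourself identify the obstacle --- the constraint $\ell_r=e_r$ removes the independent degree of freedom exploited in Theorem~\ref{alphaequal} --- but you do not show that it can be overcome, which is precisely the content of the theorem. For comparison, the paper's case~1 uses $\Sigma_1=\{\{r_1\},\{r_2,r_3,r_4,r_5\}\}$, $\Sigma_2=\{\{r_2,r_3,r_4\},\{r_5,r_6\}\}$ with slopes $6,2,2,2,4,3$ (note the two strategies of player~1 have \emph{different} cardinalities, unlike your proposed disjoint equal-sized pair, so the bottleneck player can trade one expensive resource against four cheap ones); its case~2 uses $\Sigma_1=\{\{r_1,r_2,r_4\},\{r_1,r_3,r_5\}\}$, $\Sigma_2=\{\{r_2,r_5\},\{r_3,r_4\}\}$ with $\ell_{r_1}(x)=32x$, $\ell_{r_2}(x)=\ell_{r_3}(x)=14x$, $\ell_{r_4}(x)=\ell_{r_5}(x)=12x+8$, where the affine offset $+8$ is essential to make the two players disagree about which single resource to share. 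Neither of these is a ``routine'' consequence of your outline; without concrete strategy sets, cost functions, and the verified cost table for each state, the proof is not complete.
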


\begin{proof}	
	We show the correctness of the statement by constructing two games which fulfill the preconditions stated in the two cases of the theorem and which do not have a pure Nash equilibrium.
	\begin{enumerate}
		\item
		We define the game with two players with $\alpha_1=0$ and $\alpha_2=1$. The game has six resources $R=\{r_1,r_2,\dots,r_6\}$. Each player has two strategies, thus $\Sigma_1=\left\{\{r_1\},\{r_2,r_3,r_4,r_5\}\right\}$ and $\Sigma_2=\left\{\{r_2,r_3,r_4\},\{r_5,r_6\}\right\}$. The latency and bottleneck costs are given by $\ell_{r_1}(x)=6\cdot x$, $\ell_{r_2}(x)=\ell_{r_3}(x)=\ell_{r_4}(x)=2\cdot x$, $\ell_{r_5}(x)=4\cdot x$, $\ell_{r_6}(x)=3\cdot x$ with $e_r(x)=\ell_r(x)$ for all $r\in R$
		We utilize the fact that bottleneck players prefer to allocate many cheap resources, while players who are interested in latency are more willing to share a single expensive resource.

		Let $S_{1,1}$ and $S_{1,2}$ denote the strategies of player 1, and $S_{2,1}$ and $S_{2,2}$ the strategies of player 2. Then we have the following cycle of improvement steps which visits all four states:

		$$\begin{pmatrix}6 & 6\\S_{1,1}&S_{2,1}\end{pmatrix} \xrightarrow{1} \begin{pmatrix}4 & 12\\S_{1,2}&S_{2,1}\end{pmatrix} \xrightarrow{2} \begin{pmatrix}8 & 11\\S_{1,2}&S_{2,2}\end{pmatrix}$$
		$$\xrightarrow{1} \begin{pmatrix}6 & 7\\S_{1,1}&S_{2,2}\end{pmatrix} \xrightarrow{2} \begin{pmatrix}6 & 6\\S_{1,1}&S_{2,1}\end{pmatrix}$$
		
		The numbers above the strategies give the cost of the respective player in the associated state, and the numbers on the arrows indicate which player has to change her strategy in order to get to the next state. As we see, every change in strategy decreases the cost of the player performing it. 
		Hence, none of the four states is a pure Nash equilibrium.\medskip
		
		\item
		We prove the theorem for the example $\alpha=0.5$. However, it is easily generalizable to arbitrary values between 0 and 1. The idea is to construct a game with two players in which one player always allocates an expensive resource. In addition this, both players allocate two resources and share exactly one of these resources. In detail we have the resources $R=\{r_1,r_2,\dots,r_5\}$ and the strategy sets $\Sigma_1=\left\{\{r_1,r_2,r_4\},\{r_1,r_3,r_5\}\right\}$ and $\Sigma_2=\left\{\{r_2,r_5\},\{r_3,r_4\}\right\}$. The latency and bottleneck costs are given by $\ell_{r_1}(x)=32\cdot x$, $\ell_{r_2}(x)=\ell_{r_3}(x)=14\cdot x$ and  $\ell_{r_4}(x)=\ell_{r_5}(x)=12\cdot x+8$ with $e_r(x)=\ell_r(x)$ for all $r\in R$.
		Depending on which resource is shared, the players allocate either two resources with medium costs or one cheap and one expensive resource, where the sum of the two resource costs is slightly smaller in the latter case.
		
		The second player prefers the first alternative, since she has to pay an additional price for her most expensive resource. On the other hand, the first player always allocates an expensive resource, hence she incurs no additional costs when allocating a cheap and an expensive resource.
		
		The strategy spaces are constructed in such a way that in every state the players share exactly one of the resources $\{r_2,\dots,r_5\}$. Let $S_1$ denote a state in which $r_2$ or $r_3$ is shared, and $S_2$ a state in which $r_4$ or $r_5$ is shared. Then the players incur the following costs: $c_1(S_1)=56, c_2(S_1)=38, c_1(S_2)=55, c_2(S_2)=39$.
		
		As we see, player 1 prefers the state $S_2$, while player 2 prefers $S_1$. Since both players always have the possibility to deviate to the other state, there is no state in which none of the players can improve her costs, and hence $\Gamma$ possesses no pure Nash equilibrium.
	\end{enumerate}
	\qed
\end{proof}

\section{Convergence}

In this section we investigate in which games convergence of best-response improvement sequences to a pure Nash equilibrium can be guaranteed. 
Perhaps surprisingly, there are singleton games in which best-response improvement sequences may run in cycles. This is even true for games with pure preferences.

\begin{theorem}
	\label{singleton-circles}
	There are singleton congestion games with mixed objectives and pure preferences in which best-response improvement sequences may run in cycles.
\end{theorem}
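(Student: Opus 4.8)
The plan is to exhibit an explicit game, in the style of the proofs of Theorems~\ref{alphaequal} and~\ref{Non-Mat}, together with a best-response improvement sequence that closes into a cycle. First I would record the observation implicit in Theorem~\ref{Mat01}: a singleton game with pure preferences is exactly a singleton \emph{player-specific} congestion game in which player $i$ experiences cost $\ell_r(n_r(S))$ on resource $r$ when $\alpha_i=1$ and $e_r(n_r(S))$ when $\alpha_i=0$. Only two cost profiles occur per resource, but the congestion $n_r$ counts players of both types, so the two types are coupled. I would then note that a genuine mix of types is unavoidable: if all players are latency players the game is a standard singleton congestion game and Rosenthal's exact potential gives the finite improvement property, while if all players are bottleneck players the lexicographic potential of Harks et al.\ does the same; in either case every best-response path terminates. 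Hence the cyclic behaviour must be driven precisely by the congestion coupling between latency- and bottleneck-type players.

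Second, I would rule out the smallest candidates. Two players with two singleton strategies each yield only four states, and a best-response cycle in a $2\times 2$ game must pass through all four states, which would contradict the existence of a PNE guaranteed by Theorem~\ref{Mat01}. I would therefore build the example on three players whose singleton strategy spaces overlap cyclically --- the guiding picture being player $1$ choosing between $r_1,r_2$, player $2$ between $r_2,r_3$, and player $3$ between $r_3,r_1$ --- with the two types assigned appropriately (for instance two players of one objective and one of the other) and the latency and bottleneck cost functions on the shared resources tuned so that, at each state on the intended cycle, a uniquely determined player has a strictly profitable deviation that lands on the next state.

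Third, I would present the cycle in the diagrammatic style used earlier, as a sequence $S^{(1)}\to S^{(2)}\to\cdots\to S^{(1)}$ annotated with the deviating player and her cost, and verify for each transition that the move strictly decreases the mover's cost. A useful simplification here is that each player has only two strategies, so an improving move is automatically a best response, and the two notions coincide; this removes any separate ``global best response'' bookkeeping. Since Theorem~\ref{Mat01} guarantees a PNE, that PNE necessarily sits among the states \emph{off} the cycle (a PNE admits no improving move and so cannot lie on an improvement cycle); the content of the theorem is exactly that best-response dynamics started in the cyclic region never reach it, so these games are only weakly acyclic rather than acyclic.

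The main obstacle is the limited design freedom imposed by pure preferences: each resource offers only the two functions $\ell_r$ and $e_r$, shared by all players of the corresponding type, rather than the fully individual player-specific functions available in Milchtaich's classical non-convergence example. One must realise the required cyclic pattern of strict best responses while respecting this two-functions-per-resource restriction together with the $\max$/sum structure of the costs, and simultaneously ensure that the off-cycle PNE is not reachable by a single best-response deviation out of any cycle state. Solving this system of strict best-response inequalities for all cycle transitions at once is the delicate step; once suitable small integer coefficients are found, the remainder is the routine evaluation of the costs at each listed state, exactly as in the earlier theorems.
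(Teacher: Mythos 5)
Your architecture is exactly the paper's: three resources, three players with cyclically overlapping two-element singleton strategy spaces, two latency players and one bottleneck player, and a closed best-response cycle presented state by state (the paper uses $\Sigma_1=\{\{r_1\},\{r_2\}\}$, $\Sigma_2=\{\{r_1\},\{r_3\}\}$, $\Sigma_3=\{\{r_2\},\{r_3\}\}$ with $\alpha_1=\alpha_2=1$, $\alpha_3=0$, and a six-state cycle). Your supporting observations are also sound: with only two strategies per player every improving move is a best response, a homogeneous population cannot cycle, and the PNE guaranteed by Theorem~\ref{Mat01} must lie off the cycle.

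The gap is that you never actually produce the game. For a theorem of the form ``there exists a game such that\dots'', the explicit instance \emph{is} the proof; the cost functions and the verified chain of strict improvements are not ``routine evaluation'' to be deferred but the entire mathematical content. You yourself flag ``solving this system of strict best-response inequalities for all cycle transitions at once'' as the delicate step and then stop there, so what you have is a (correct) search strategy, not a proof. For reference, the paper resolves exactly this system with $\ell_{r_1}=(2,5)$, $\ell_{r_2}=(3,4)$, $\ell_{r_3}=(1,6)$, $e_{r_2}=(1,4)$, $e_{r_3}=(2,3)$ (the value $e_{r_1}$ being irrelevant since only latency players use $r_1$), which yields the cycle $(r_1,r_1,r_2)\to(r_2,r_1,r_2)\to(r_2,r_3,r_2)\to(r_2,r_3,r_3)\to(r_1,r_3,r_3)\to(r_1,r_1,r_3)\to(r_1,r_1,r_2)$ while the two PNE $(r_1,r_3,r_2)$ and $(r_2,r_1,r_3)$ are never visited. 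Until you exhibit such numbers and check each of the transitions, the claim is not established.
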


\begin{proof}
	We prove the theorem by constructing a singleton game with three players and showing that there exists a cyclic best-response improvement sequence in certain states. The game consists of three resources $R=\{r_1,r_2,r_3\}$ and the players have the strategy sets $\Sigma_1=\left\{\{r_1\},\{r_2\}\right\}$, $\Sigma_2=\left\{\{r_1\},\{r_3\}\right\}$ and $\Sigma_3=\left\{\{r_2\},\{r_3\}\right\}$. Two players prefer the latency costs $\alpha_1=\alpha_2=1$, one the bottleneck costs $\alpha_3=0$. The latency and bottleneck costs are given by $\ell_{r_1}=(2,5)$, $\ell_{r_2}=(3,4)$, $\ell_{r_3}=(1,6)$ and $e_{r_2}=(1,4)$,  $e_{r_3}=(2,3)$. The first number in the cost functions gives the cost if the resource is used by one player, the second number gives the cost if two players use it. The bottleneck cost function of $r_1$ is irrelevant since it is only used by players 1 and 2.
	
	In this game, the following cycling best-response improvement sequence can occur (set braces omitted for better readability):
	
	$$\begin{pmatrix}5 & 5 & 1\\ r_1 & r_1 & r_2 \end{pmatrix} \xrightarrow{1} \begin{pmatrix}4 & 2 & 4\\ r_2 & r_1 & r_2 \end{pmatrix} \xrightarrow{2} \begin{pmatrix}4 & 1 & 4\\ r_2 & r_3 & r_2 \end{pmatrix} \xrightarrow{3} \begin{pmatrix}3 & 6 & 3\\ r_2 & r_3 & r_3 \end{pmatrix}$$
	
	$$\xrightarrow{1} \begin{pmatrix}2 & 6 & 3\\ r_1 & r_3 & r_3 \end{pmatrix} \xrightarrow{2} \begin{pmatrix}5 & 5 & 2\\ r_1 & r_1 & r_3 \end{pmatrix} \xrightarrow{3} \begin{pmatrix}5 & 5 & 1\\ r_1 & r_1 & r_2 \end{pmatrix}$$
	
	The numbers on the arrows indicate which player has to change her strategy in order to reach the next state. The variable $r_i$ denotes the resource which is used by the corresponding player and the number on top gives the cost value for this player. We can verify that each change in strategy is beneficial for the player performing it (since every player has only two strategies, every improving strategy is a best-response strategy).
	
	Hence, we have a cycle of six states that are visited during this best-response improvement sequence. The pure Nash equilibria $(r_1,r_3,r_2)$ and $(r_2,r_1,r_3)$ are never reached.  
	\qed
\end{proof}

Note that due to our reduction in the proof of Theorem~\ref{Mat01}, we know that there exists a sequence that leads to an equilibrium~\cite{AR09}.

\begin{corollary}
	\hfill
	\begin{enumerate}
		\item Singleton congestion games with mixed objectives are weakly acyclic.
		
		\item Matroid congestion games with mixed objectives that have pure preferences are weakly acyclic.
		
	\end{enumerate}
\end{corollary}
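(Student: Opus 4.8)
The plan is to derive both statements from the reduction to player-specific congestion games already carried out in the proof of Theorem~\ref{Mat01}, combined with the fact that singleton player-specific congestion games~\cite{MI96} and matroid player-specific congestion games~\cite{AR09} are weakly acyclic, i.e., from every state there is \emph{some} best-response improvement sequence that reaches a PNE. I would first recall that weak acyclicity requires only the existence of such a converging path from each starting state, so that it is fully compatible with Theorem~\ref{singleton-circles}, which exhibits cycling paths.

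For the first statement I would invoke case~1 of Theorem~\ref{Mat01}. There the player-specific cost functions $c_r^i(x)=\alpha_i\cdot\ell_r(x)+(1-\alpha_i)\cdot e_r(x)$ satisfy $c_i(S)=c_r^i(n_r(S))$ identically whenever $S_i=\{r\}$. Hence the mixed-objective game and the constructed singleton player-specific game share the same state space, the same cost of every player in every state, and therefore the same improving best-response moves and the same PNE. Weak acyclicity of singleton player-specific games~\cite{MI96} then transfers verbatim.

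For the second statement I would use case~2 of Theorem~\ref{Mat01}, which maps a matroid mixed-objective game with pure preferences to a matroid player-specific game $\Gamma'$ in which latency players ($\alpha_i=1$) carry cost $\ell_r$ and bottleneck players ($\alpha_i=0$) carry cost $e_r$; this $\Gamma'$ is weakly acyclic by~\cite{AR09}. The two games have the same state space and, by Lemma~\ref{Mat-Lemma}, the same PNE. The improving moves of latency players coincide, and for a bottleneck player any move that strictly lowers her maximum bottleneck cost also strictly lowers the sum of her bottleneck costs: if the sum did not strictly decrease, her current basis would already be sum-minimal, hence max-minimal by Lemma~\ref{Mat-Lemma}, contradicting the assumed strict decrease of the maximum. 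Thus every best-response improvement move available in $\Gamma$ is also one in $\Gamma'$.

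The main obstacle is the converse direction needed to actually import a converging sequence: a best-response sequence guaranteed in $\Gamma'$ may contain bottleneck-player moves that strictly decrease the sum of bottleneck costs while leaving the maximum unchanged, and such ``max-neutral'' moves are not improving steps in $\Gamma$. To close this gap I would argue that these moves can be eliminated: since matroid exchange lets a bottleneck player reach a basis that is simultaneously sum- and max-minimal in a single best response, I would follow the player order dictated by the $\Gamma'$-sequence but let each bottleneck player move only when her maximum can strictly decrease, jumping directly to a sum-minimal (hence, by Lemma~\ref{Mat-Lemma}, max-minimal) basis. Showing that this filtered sequence remains well defined and still reaches a state in which no player---latency or bottleneck---can improve in $\Gamma$, i.e.\ a PNE, is the step I expect to require the most care, and it is precisely where Lemma~\ref{Mat-Lemma} and the coincidence of the PNE sets of $\Gamma$ and $\Gamma'$ do the work.
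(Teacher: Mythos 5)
Your overall strategy is exactly the paper's: the corollary is justified there by a one-line remark that the reductions in the proof of Theorem~\ref{Mat01} carry the weak acyclicity of singleton and matroid player-specific congestion games \cite{MI96,AR09} back to $\Gamma$. Your treatment of the singleton case is complete and correct, since there the constructed player-specific game assigns every player the identical cost in every state, so improving moves, improvement paths and equilibria transfer verbatim.

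For the matroid case with pure preferences, however, two of your auxiliary claims are false, and the fix you sketch for the difficulty you correctly identified does not work as stated. First, it is not true that a move strictly lowering a bottleneck player's maximum also strictly lowers her sum: your argument conflates ``this particular move does not decrease the sum'' with ``the current basis is sum-minimal''. Concretely, in a rank-$2$ uniform matroid with element costs $1,10,6,6$, the swap from $\{1,10\}$ to $\{6,6\}$ lowers the maximum from $10$ to $6$ but raises the sum from $11$ to $12$; since $\{6,6\}$ is max-minimal, this is even a best-response improvement in $\Gamma$ that is no improvement at all in $\Gamma'$. The same example shows the PNE sets do \emph{not} coincide: a max-minimal but not sum-minimal basis yields a PNE of $\Gamma$ that is not one of $\Gamma'$; only the inclusion of the PNE of $\Gamma'$ in those of $\Gamma$ holds (fortunately the direction you need). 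Second, the filtering idea is not well defined: once you skip a move of the guaranteed $\Gamma'$-sequence or replace it by a jump to a different basis, every subsequent state differs from the one the sequence prescribes, so its remaining moves are no longer improvements and ``following its player order'' buys nothing. The sound local observation is the contrapositive of Lemma~\ref{Mat-Lemma}: a bottleneck player who is not max-minimal is not sum-minimal, so moving to a sum-minimal basis is a best response and a strict improvement in \emph{both} games simultaneously. But this only shows that the dynamic you want is a sub-dynamic of $\Gamma'$-best-response dynamics, which may cycle; termination must instead come from re-running the constructive argument of \cite{AR09} with bottleneck players always choosing sum-minimal bases, not from post-processing an arbitrary converging sequence. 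To be fair, the paper glosses over exactly this point as well.
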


We now turn to matroid games with a monotone dependence and show that they converge quickly to a PNE if the players perform lazy best-response moves. That is, if players perform a best response move, they choose the best-response strategy that has as many resources in common with the previous strategy as possible.
\begin{theorem}
	\label{convergence01}
	Let $\Gamma$ be a matroid congestion game with mixed objectives with cost functions that have a monotone dependence. Then any sequence of lazy best-response improvement steps starting from an arbitrary state in $\Gamma$ converges to a pure Nash equilibrium after a polynomial number of steps. 
\end{theorem}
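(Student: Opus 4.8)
The plan is to reduce the lazy dynamics to those of an ordinary matroid congestion game whose resource costs are the latencies $\ell_r$, and then to control the number of steps by exploiting the matroid exchange structure together with laziness. The starting point is the best-response characterization already underlying Theorem~\ref{Mat01}: by Lemma~\ref{Mat-Lemma} and the monotone dependence $e_r=f(\ell_r)$, a basis minimizing the total latency $\sum_{r\in S_i}\ell_r(S)$ simultaneously minimizes the maximal latency and hence, since $f$ is non-decreasing, the maximal bottleneck cost $\max_{r\in S_i}e_r(S)$. Consequently, for \emph{every} player, independent of $\alpha_i$, a minimum-total-latency basis is a best response, and no best-response move can increase the mover's own maximal resource latency.

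First I would turn Rosenthal's potential $\Phi(S)=\sum_{r\in R}\sum_{k=1}^{n_r(S)}\ell_r(k)$ into a strict progress measure for the lazy dynamics. For a player with $\alpha_i>0$ this is immediate: a basis that is not of minimum total latency has strictly larger total latency and at least the minimal maximal bottleneck cost, hence strictly larger $c_i$; so an improving best response strictly lowers the mover's total latency and $\Phi$ strictly decreases. The delicate case is $\alpha_i=0$, where a best response need only minimize the maximal bottleneck cost, and an \emph{arbitrary} such move may increase the total latency. Here laziness is essential: let $L^\ast$ be the minimal attainable maximal latency for the mover and discard all resources of latency above $L^\ast$; the mover's retained resources form an independent set that, by matroid augmentation, extends to a basis consisting only of resources of latency at most $L^\ast$, i.e.\ a min-max basis. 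Since no min-max basis can contain any of the mover's resources of latency exceeding $L^\ast$, a maximum-overlap (lazy) best response keeps exactly the mover's cheap resources and replaces each expensive one by a resource of latency at most $L^\ast$. Every such replacement strictly lowers the mover's total latency, so $\Phi$ strictly decreases on lazy moves as well. This already yields convergence of any lazy best-response sequence and explains why laziness, as opposed to arbitrary best responses, is needed.

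It remains to bound the number of steps by a polynomial, and I expect this to be the main obstacle, since it cannot be read off from a potential function that may take super-polynomially many values. Here I would lean on the combinatorial analysis of matroid congestion games of Ackermann et al.~\cite{AR09}: each lazy best response is realized as a sequence of single profitable element exchanges respecting the matroid exchange property, and the total number of such exchanges over the whole run is polynomial in $n$ and $|R|$. The characterization above makes this applicable because every lazy move in $\Gamma$ discards precisely the resources an ordinary latency matroid congestion game would drop and retains the cheap ones. The hard part will be verifying that the exchange-based counting of~\cite{AR09} survives the presence of pure-bottleneck players, whose lazy moves decrease $\Phi$ but need not target the minimum-total-latency basis; making the per-resource monotonicity exploited in~\cite{AR09} robust to this additional freedom is the crux of the proof.
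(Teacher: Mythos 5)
Your argument for why every lazy best-response move strictly decreases Rosenthal's potential $\Phi(S)=\sum_{r\in R}\sum_{k=1}^{n_r(S)}\ell_r(k)$ is sound and in fact somewhat more detailed than the paper's on the one delicate point (why laziness forces an $\alpha_i=0$ player to keep her cheap resources and replace each discarded resource by one of strictly smaller latency). But the theorem claims a \emph{polynomial} number of steps, and this is exactly where your proposal stops: you correctly observe that a real-valued potential taking super-polynomially many values cannot give the bound, and then defer to ``the exchange-based counting of~\cite{AR09}'' while explicitly flagging that you have not verified it survives the presence of pure-bottleneck players. That verification is the content of the theorem, so as written the proof is incomplete.

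The missing device is small and you already have every ingredient for it. The paper replaces each latency value by its \emph{rank}: enumerate the at most $n\cdot m$ values $\{\ell_r(x) \mid r\in R,\ x\in N\}$ in increasing order, let $\ell'_r(x)$ be the position of $\ell_r(x)$ in this enumeration, and use $\Phi(S)=\sum_{r\in R}\sum_{i=1}^{n_r(S)}\ell'_r(i)$. Your own per-exchange analysis shows that every resource swapped out in a lazy best response is replaced by one with strictly smaller latency, i.e.\ $\ell_{r'}(S')<\ell_r(S)$; ranks preserve strict inequalities, so the rank-based potential drops by at least $1$ per lazy step, and since it is bounded by $n^2m^2$ the polynomial bound follows immediately --- no further exchange counting and no separate treatment of pure-bottleneck players is needed. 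You should fold this normalization into your argument rather than leaving the step-count as an appeal to~\cite{AR09}.
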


\begin{proof}	
	The proof idea is based on the proof by Ackermann et al.~\cite{AC08} which shows that matroid congestion games guarantee polynomial convergence to PNE.\\
	We consider an increasingly ordered enumeration of all latency values that can occur in $\Gamma$ (an enumeration of the values $\{\ell_r(x)~|~r\in R,x\in N\}$). Let $\ell'_r(x)$ denote the position of the respective cost value in the enumeration.
	
	We define the following variant of Rosenthal's potential function:
	$\Phi(S)=\sum_{r\in R}\sum_{i=1}^{n_r(S)}\ell'_r(i)$.
	If $n=|N|$ denotes the number of players, and $m=|R|$ the number of resources in $\Gamma$, then there are at most $n\cdot m$ different cost values in the game. Hence, the value of $\Phi$ is upper bounded by $n^2\cdot m^2$. Thus, it suffices to show that every lazy best-response improvement step decreases the value of $\Phi$ by at least 1.
	
	If a player replaces a single resource $r$ by another resource $r'$ in a lazy best-response with $\alpha_i \cdot \ell_{r'}(S') +(1-\alpha_i) \cdot e_{r'}(S') <\alpha_i \cdot \ell_{r}(S) +(1-\alpha_i) \cdot e_{r}(S)$, then due to  the monotone dependence, we have $\ell_{r'}(S') < \ell_r(S)$. Hence $\ell_{r'}(S')$ must occur before $\ell_r(S)$ in the increasingly ordered enumeration of the cost values and we have $\ell'_{r'}(S') < \ell'_r(S)$.
	Thus, every sequence of lazy best-response improvement steps in $\Gamma$ terminates after a polynomial number of steps. 
	\qed
\end{proof}

We remark that the only reason to restrict the players to \emph{lazy} instead of arbitrary best-response strategies is that the players may have a preference value of exactly 0. If a player's cost is determined solely by her most expensive resource, she might be playing a best-response strategy by replacing her most expensive resource by a cheaper one and additionally replace another resource by a more expensive one. This additional exchange does not necessarily increase her costs, but it could lead to an increase in the value of the potential function. However, if all players have preference values different from 0, the theorem holds for arbitrary best-response improvement steps.

\section{Approximate Pure Nash Equilibria}
\label{approximate}

As PNE do not exist in general, we study the existence of  approximate equilibria. However, in general we cannot achieve an approximation factor better than $3$.

\begin{theorem}
	\label{linear-3}
	There is a congestion game with mixed objectives, in which all cost functions are linear, that does not contain a $\beta$-approximate pure Nash equilibrium for any $\beta<3$.
\end{theorem}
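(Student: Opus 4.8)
The statement is a lower bound, so the plan is to exhibit a single game together with an argument that every state is ``far'' from equilibrium. Concretely, a state $S$ fails to be a $\beta$-approximate PNE exactly when some player $i$ has a deviation $S_i'$ with $c_i(S) > \beta\cdot c_i(S_i',S_{-i})$. Hence it suffices to construct a game with linear cost functions in which every state admits a unilateral deviation that decreases the deviator's cost by a factor of $3$: then for every $\beta<3$ no state survives as a $\beta$-approximate PNE. I would therefore aim for a game whose improvement structure is a cycle in which each improving move has ratio exactly $3$.

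Before constructing, I would pin down the main obstacle, which also dictates the shape of the example. For linear (non-negative) cost functions and a \emph{single} opponent move, the congestion of each resource changes by at most $1$, so each resource cost changes by a factor in $[\tfrac12,2]$; since this bound is preserved by sums, by maxima, and by their convex combination, any player's cost changes by at most a factor of $2$ whenever one other player moves. Tracking a fixed player $p$ around any improvement cycle of length $L$, her cost returns to its starting value, so the product of the per-step ratios (before/after) equals $1$. If $p$ moves $m_p$ times, each improving ratio being $\ge 3$, and is passive on the remaining $L-m_p$ steps (each ratio $\ge\tfrac12$), then $3^{m_p}\le 2^{L-m_p}$. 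Summing over the players that actually move and using $\sum_p m_p=L$ gives $\log 3\le (n'-1)\log 2$, where $n'$ is the number of moving players; hence $n'\ge 3$. This rules out a two-player gadget of the form used in Theorem~\ref{alphaequal} and tells me the construction must involve at least three active players.

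Guided by this, I would build a three-player game with linear costs in the spirit of the singleton three-player cycle of Theorem~\ref{singleton-circles}, giving each player two strategies so that the improvement dynamics contain a $6$-cycle in which every player moves exactly twice (two moves are forced, since with two strategies a player must return to her original choice). The calibration target is to choose the linear coefficients so that each of the six improving moves has ratio exactly $3$, while for each player the four passive steps contribute a product of exactly $9=3^2$, cancelling her two improving drops of factor $3$. This is feasible precisely because $2^4=16>9$, so each passive factor can be kept strictly below the hard ceiling of $2$ (a geometric mean of $9^{1/4}\approx 1.73$); the slack left open by the counting bound is exactly what the design must exploit, and I would use the sum/max interplay of the mixed objective to realize the required asymmetric cost values across each player's two strategies.

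The step I expect to be hardest is this simultaneous calibration together with covering \emph{all} states. The six cycle states are handled by construction, but three binary players give eight states, so I must also verify that the two off-cycle states admit a factor-$3$ deviation, or arrange the coefficients (for instance via strictly dominated alternatives, as in Theorem~\ref{alphaequal}) so that these states still contain an improving move of ratio at least $3$. Once every one of the eight states is shown to possess such a deviation, the conclusion is immediate: no state is a $\beta$-approximate PNE for any $\beta<3$, which is exactly the claim.
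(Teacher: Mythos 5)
Your reduction of the statement to ``exhibit a linear-cost game in which every state admits a unilateral deviation cutting the deviator's cost by a factor of $3$'' is the right framing, and your counting argument (passive costs change by a factor of at most $2$ per opponent move under linear costs, hence any cycle of factor-$3$ improvements needs at least three moving players) is correct and a genuinely useful sanity check that rules out a two-player gadget in the style of Theorem~\ref{alphaequal}. However, for an existence theorem proved by explicit construction, the construction \emph{is} the proof, and you never produce one: the strategy sets, the linear coefficients, the verification of the six improvement ratios, and the check of the two off-cycle states are all deferred (``I would build\dots'', ``the calibration target is\dots'', ``the step I expect to be hardest\dots''). As written, nothing has been established beyond necessary conditions on a hypothetical example.

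Beyond incompleteness, there is concrete reason to doubt that your minimal three-player, two-strategies-each shape is realizable, and your own bound shows why: with a $6$-cycle in which each player moves twice, the four passive steps must multiply each player's cost by exactly $9$, so \emph{every} passive step must increase her cost by a factor of at least $\tfrac{9}{8}$ and by $\sqrt{3}\approx 1.73$ on average. Hence every one of the six moves must simultaneously cut the mover's cost by a factor of $3$ and come close to doubling \emph{both} opponents' costs; under linear costs the latter essentially forces the mover to join nearly all cost-carrying resources of both opponents while those resources sit at congestion $1$, and these requirements interact restrictively across consecutive steps. The paper sidesteps this tightrope entirely with a larger construction: ten players, namely two groups of four latency players ($\alpha_i=1$) each owning two private resources with $\ell_r(x)=x$, plus two bottleneck players ($\alpha_i=0$) whose large strategies straddle both groups. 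There every factor-$3$ ratio is simply congestion $3$ versus congestion $1$ on identically priced resources (no calibration of coefficients is needed), and a pigeonhole argument over the two groups guarantees that in every state either some small player or one of the two big players realizes that ratio. To complete your route you would have to either exhibit explicit coefficients for the three-player gadget and verify all eight states, or abandon minimality and argue along the paper's lines.
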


\begin{proof}
	We show the theorem by constructing a game with 10 players, in which in every state there is at least one player who can improve her costs by a factor of 3. A formal definition of the game is given by:
	
	\begin{center}
		$\Gamma=(N,R,\left(\Sigma_i\right)_{i\in N},\left(\alpha_i\right)_{i\in N},
		\left(\ell_r\right)_{r\in R},\left(e_r\right)_{r\in R})$ with
	\end{center}
	$N=\{1,\dots,10\}$, $R=\left\{r^i_1,r^i_2~|~i\in N\setminus\{5,6\}\right\}\cup$
	\\$\ \left\{r^{i,j}_{1,1},r^{i,j}_{1,2},r^{i,j}_{2,1},r^{i,j}_{2,2}~|~i<j \text{ and } \left(i,j\in \{1,2,3,4\}\text{ or } i,j\in \{7,8,9,10\}\right)\right\}$,
	\begin{align*}
		\Sigma_i= & \left\{\{r^i_1\} \cup \{r^{j,k}_{1,1},r^{j,k}_{2,1}~|~j=i\text{ or } k=i\},\right. \\ & \left.\{r^i_2\} \cup \{r^{j,k}_{1,2},r^{j,k}_{2,2}~|~j=i\text{ or } k=i\}\right\} \text{ for all }i\in N\setminus \{5,6\}\\
		\Sigma_{5}= & \left\{\{r^1_1,\dots,r^4_1,r^{7}_1,\dots,r^{10}_1, r^{i,j}_{1,2}, r^{k,l}_{1,2}~|~i,j\in \{1,\dots,4\},\ k,l\in \{7,\dots,10\}\right\}\\\cup & \left\{\{r^1_2,\dots,r^4_2,r^{7}_2,\dots,r^{10}_2, r^{i,j}_{1,1}, r^{k,l}_{1,1}~|~i,j\in \{1,\dots,4\},\ k,l\in \{7,\dots,10\}\right\}
		\\
		\Sigma_{6}= & \left\{\{r^1_1,\dots,r^4_1,r^{7}_2,\dots,r^{10}_2, r^{i,j}_{2,2}, r^{k,l}_{2,1}~|~i,j\in \{1,\dots,4\},\ k,l\in \{7,\dots,10\}\right\}\\\cup & \left\{\{r^1_2,\dots,r^4_2,r^{7}_1,\dots,r^{10}_1, r^{i,j}_{2,1}, r^{k,l}_{2,2}~|~i,j\in \{1,\dots,4\},\ k,l\in \{7,\dots,10\}\right\}
	\end{align*}
	$\alpha_i=1$ for all $i\in N\setminus\{5,6\}$ and $\alpha_i=0$ for $i\in \{5,6\}$,\\ 
	$\ell_r(x)=x$  and $e_r(x)=0$ for all $r\in\left\{r^i_j~|~i\in N\setminus \{5,6\},j\in\{1,2\}\right\}$,\\
	$\ell_r(x)=0$  and $e_r(x)=x$ for all $r\in\left\{r^{i,j}_{k,l}~|~i,j\in N\setminus \{5,6\},\ k,l\in\{1,2\}\right\}$
	
	The game contains three different groups of players and two different groups of resources. Players 1 to 4 and 7 to 10 each have two personal resources $r^i_1$ and $r^i_2$ among which they have to choose. These are the only resources on which they incur costs.
	
	In addition to these resources, there are two times two resources corresponding to each set consisting of two players from the same group (either 1 to 4 or 7 to 10). The two resources represent the two strategies which are available to these players, and exist distinctly for both players 5 and 6. If player $i$ plays her first strategy, then she also allocates all resources that correspond to sets in which $i$ is contained and represent the first strategy.
	
	In every state, the players 5 and 6 allocate one of the personal resources $r^i_j$ for each player $i\in \{1,\dots,4\}$ and $i\in \{7,\dots, 10\}$, where the $j$ is the same for all players among a group. Additionally, for both groups they have to allocate a resource that corresponds to one pair of players and represents the $j$ that they are not using (e.\,g., if player 5 allocates resource $r^i_1$ for all players $i\in\{1,\dots,4\}$, she must also allocate a resource that corresponds to a pair of players from $\{1,\dots,4\}$ and represents strategy 2, and an analogous resource for the players in $\{7,\dots,10\}$).
	
	These two additional resources are the ones on which players 5 and 6 incur costs. Since they are interested in their bottleneck costs, their costs are equal to the more expensive of the two.
	
	We can analyze this game by considering an arbitrary state. The strategy sets of players 5 and 6 are constructed in such a way that in every state they allocate the same personal resources of all players in $\{1,\dots,4\}$ or $\{7,\dots,10\}$.
	
	W.\,l.\,o.\,g. assume that they both allocate the resources $r^1_1,\dots,r^4_1$. This implies that the players 1 to 4 have a cost of $3$ if they use the first resource, and a cost of $1$ if they allocate their second resource. Hence, this state cannot be a $\beta$-approximate PNE with $\beta< 3$ if any of them play their first strategy.
	
	If, on the other hand, all these players play their second strategy, players 5 and 6 incur a cost of 3 on their resources $r^{i,j}_{1,2}$ and $r^{i,j}_{2,2}$, respectively, independent of which $i$ and $j$ they are using, since all players in $\{1,\dots,4\}$ allocate the resources corresponding to strategy 2. Both player 5 and 6 could decrease the cost of the resource corresponding to the first group to 1 by switching the strategy. However, we have to take the other group into account as well.
	
	If player 5 switches her strategy, she allocates the resource $r^{k,l}_{1,1}$ for freely selectable $k$ and $l$ in $\{7,\dots, 10\}$, which yields a cost of 1 if at least two players from $\{7,\dots,10\}$ play the second strategy. Analogously, player 6 incurs a cost of 1 on the resource corresponding to the second group if at least two players from this group play their first strategy. One of the two cases must hold, which implies that either player 5 or player 6 can improve her cost from 3 to 1. Hence, in every state there is at least one player who can improve her cost by a factor of 3, and there exists no $\beta$-approximate pure Nash equilibrium for any $\beta<3$. 
	\qed
\end{proof}

On the positive side we can show small approximation factors for small strategy sets. Besides matroid games, we can show approximation factors which are independent of the structure of the strategy sets, but depend on either $\alpha$-uniform players or on equal cost functions:

\begin{theorem}
	Let $\Gamma$ be a congestion game with mixed objectives. Let $d=\linebreak\max_{i\in N, S_i\in \Sigma_i}|S_i|$ be the maximal number of resources a player can allocate.
	\label{approximatetheorem}
	\begin{enumerate}
		\item If $\Gamma$ is a matroid congestion game, then $\Gamma$ contains a $d$-approximate pure Nash equilibrium.
		\item If the players are $\alpha$-uniform, then $\Gamma$ contains a $d$-approximate pure Nash equilibrium.
		\item If $e_r=\ell_r$ for all resources $r\in R$, then $\Gamma$ contains a $\sqrt{d}$-approximate pure Nash equilibrium.
		\item If the players are $\alpha$-uniform and $e_r=\ell_r$ for all resources $r\in R$, then $\Gamma$ contains a $\beta$-approximate pure Nash equilibrium for $\beta=\frac{d}{\alpha\cdot(d-1)+1}$.
	\end{enumerate} 
\end{theorem}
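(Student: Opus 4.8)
My plan is to prove all four bounds through a single device: the \emph{sum surrogate} game $\tilde\Gamma$, obtained from $\Gamma$ by replacing each player's bottleneck term $\max_{r\in S_i}e_r(S)$ by the corresponding sum $\sum_{r\in S_i}e_r(S)$, so that player $i$ has cost $\tilde c_i(S)=\alpha_i\sum_{r\in S_i}\ell_r(S)+(1-\alpha_i)\sum_{r\in S_i}e_r(S)=\sum_{r\in S_i}\big(\alpha_i\ell_r(S)+(1-\alpha_i)e_r(S)\big)$. This transformation serves two purposes. First, $\tilde\Gamma$ is a congestion game of the classical (sum) type, so the existence of a pure equilibrium $S^\ast$ follows from known results. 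Second, since every strategy contains at most $d$ resources, for every state $S$ and player $i$ we obtain the sandwiching estimate $c_i(S)\le \tilde c_i(S)\le d\cdot c_i(S)$, which follows from $\max_{r\in S_i}e_r(S)\le \sum_{r\in S_i}e_r(S)\le d\cdot \max_{r\in S_i}e_r(S)$ together with $d\ge 1$.

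For parts (1) and (2) I would take $S^\ast$ to be a pure Nash equilibrium of $\tilde\Gamma$. In the matroid case $\tilde\Gamma$ is a player-specific matroid congestion game with player-specific cost functions $c^i_r(x)=\alpha_i\ell_r(x)+(1-\alpha_i)e_r(x)$, so $S^\ast$ exists by the result of Ackermann et al.~\cite{AR09}; in the $\alpha$-uniform case $\tilde\Gamma$ is an ordinary congestion game with the single resource cost functions $g_r=\alpha\ell_r+(1-\alpha)e_r$, so $S^\ast$ exists by Rosenthal's potential~\cite{RO73}. In either case the chain $c_i(S^\ast)\le \tilde c_i(S^\ast)\le \tilde c_i(S_i',S^\ast_{-i})\le d\cdot c_i(S_i',S^\ast_{-i})$, valid for every player $i$ and every deviation $S_i'$, shows that $S^\ast$ is a $d$-approximate equilibrium of $\Gamma$.

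Part (4) refines this estimate in the special case $e_r=\ell_r=:w_r$ with $\alpha_i=\alpha$. Here the surrogate cost collapses to the plain sum $\tilde c_i(S)=\sum_{r\in S_i}w_r(S)$, so $S^\ast$ is an \emph{exact} equilibrium of the ordinary congestion game on the $w_r$. Writing $A=\sum_{r\in S_i^\ast}w_r(S^\ast)$ and $a=\max_{r\in S_i^\ast}w_r(S^\ast)$ at the equilibrium, and $B,b$ for the analogous quantities after a deviation, I would use $a\le A$ to get $c_i(S^\ast)=\alpha A+(1-\alpha)a\le A$, and the equilibrium inequality $A\le B$ together with $b\ge B/d\ge A/d$ to get $c_i(S_i',S^\ast_{-i})\ge \alpha A+(1-\alpha)A/d=A\cdot\frac{\alpha(d-1)+1}{d}$. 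Dividing yields exactly the claimed ratio $\beta=\frac{d}{\alpha(d-1)+1}$.

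Part (3), where $e_r=\ell_r$ but the $\alpha_i$ are arbitrary, is the crux and I expect it to be the main obstacle. The sum surrogate alone is not enough: the computation above shows that a sum equilibrium only guarantees the per-player ratio $\frac{d}{\alpha_i(d-1)+1}$, which degrades to $d$ for a pure bottleneck player ($\alpha_i=0$), and a one-resource-versus-$d$-equal-resources example shows this is tight. Symmetrically, starting from a bottleneck (lexicographic) equilibrium in the sense of Harks et al.~\cite{HA09} yields the complementary guarantee $\alpha_i(d-1)+1$, which is good for small $\alpha_i$ but degrades to $d$ for latency players. The key observation is that the product of the two one-sided guarantees equals $d$ for every $\alpha_i$, so their geometric mean never exceeds $\sqrt d$, the balance being attained at the threshold $\alpha_i=\frac{1}{\sqrt d+1}$. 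The plan is therefore to exhibit a single state that realizes, for every player simultaneously, the better of the two guarantees, so that the worst case is the balanced value $\sqrt d$; concretely I would seek an interpolating potential (for instance a suitable combination of Rosenthal's sum potential and the lexicographic bottleneck potential) whose minimizer blocks every $\sqrt d$-improving move irrespective of the player's preference value. Verifying that no deviation can beat both the sum side and the bottleneck side by more than $\sqrt d$ — in particular ruling out the sum-increasing, maximum-decreasing deviations that defeat the pure sum surrogate — is the delicate step, and the place where the assumption $e_r=\ell_r$ must be used in full.
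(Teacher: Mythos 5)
Parts (1), (2), and (4) of your proposal are correct and essentially coincide with the paper's argument: the paper also passes to the player-specific sum game $c^i_r(x)=\alpha_i\ell_r(x)+(1-\alpha_i)e_r(x)$ and invokes Ackermann et al.\ for the matroid case, and for (2) and (4) it minimizes exactly your surrogate's Rosenthal potential (phrased as a ``$\beta$-approximate potential function'' whose value drops under any $\beta$-improving move); your sandwich $c_i(S)\le\tilde c_i(S)\le d\cdot c_i(S)$ and your $A,a,B,b$ computation for (4) reproduce the same inequalities.

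The genuine gap is part (3), which you yourself flag as unresolved. Your diagnosis is right --- the sum surrogate gives only $d$ for bottleneck players, the bottleneck-side guarantee gives only $d$ for latency players, and the balanced value is $\sqrt d$ at $\alpha_i=\tfrac{1}{\sqrt d+1}$ --- but the proof requires actually exhibiting the interpolating object, and that is the one idea your plan is missing. The paper's choice is the \emph{squared-latency} potential
\begin{equation*}
\Phi(S)=\sum_{r\in R}\sum_{i=1}^{n_r(S)}\ell_r(i)^2,
\end{equation*}
i.e.\ Rosenthal's potential for the auxiliary congestion game with resource costs $\ell_r^2$. At a minimizer $S$ of $\Phi$ one has $\sum_{r\in S_i}\ell_r(S)^2\le\sum_{r\in S_i'}\ell_r(S')^2$ for every deviation, and the $\ell_1/\ell_2/\ell_\infty$ comparisons $\sum_r x_r\ge\bigl(\sum_r x_r^2\bigr)^{1/2}$, $\bigl(\sum_r x_r^2\bigr)^{1/2}\ge\tfrac{1}{\sqrt d}\sum_r x_r$, and $\max_r x_r^2\ge\tfrac1d\sum_r x_r^2\ge\tfrac1d\max_r x_r^2$ let one lower-bound \emph{both} the sum term and the max term of $c_i(S_i',S_{-i})$ by $\tfrac{1}{\sqrt d}$ times the corresponding term of $c_i(S)$, uniformly in $\alpha_i$. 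This squared potential is precisely the ``geometric mean'' interpolation you were looking for, and it is where the hypothesis $e_r=\ell_r$ is used in full; without supplying it (or an equivalent construction) and verifying these inequalities, your part (3) is a plan rather than a proof.
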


\begin{proof} 
	\begin{enumerate}
		\item
		The proof relies on the fact that PNE always exist in player-specific matroid congestion games~\cite{AR09}.
		We define a player-specific congestion game $\Gamma'$ with the following cost functions: $c^i_r(S)=\alpha_i\cdot \ell_r(S)+(1-\alpha_i)\cdot e_r(S)$.
		
		We show that every PNE in $\Gamma'$ is a $d$-approximate pure Nash equilibrium in $\Gamma$. Since PNE always exist in matroid player-specific congestion games, the claim follows.
		We denote by $c_i(S)$ the costs of player $i$ in state $S$ in $\Gamma$, and by $c_i^p(S)$ the costs in $\Gamma'$.		
		Let $S_i$ be a best-response strategy w.~r.~t. $S_{-i}$ in $\Gamma'$. Then we get for all strategies $S_i'\in \Sigma_i$:
		\begin{align*}
			c_i(S_i',S_{-i}) & =\alpha_i\cdot \sum_{r\in S_i'} \ell_r(S_i',S_{-i})+(1-\alpha_i)\cdot \max_{r\in S_i'}\ e_r(S_i',S_{-i})\\
			& \geq \alpha_i\cdot \sum_{r\in S_i'} \ell_r(S_i',S_{-i})+(1-\alpha_i)\cdot\frac{1}{d}\cdot \sum_{r\in S_i'}\ e_r(S_i',S_{-i})\\
			& \geq \frac{1}{d}\cdot c_i^p(S_i',S_{-i}) \geq \frac{1}{d}\cdot c_i^p(S_i,S_{-i})\geq \frac{1}{d}\cdot c_i(S_i,S_{-i})
		\end{align*}

		\item	
		We show that the  function 
		$\Phi(S)=\sum_{r\in R}\sum_{i=1}^{n_r(S)}\left(\alpha\cdot \ell_r(i)+(1-\alpha)\cdot e_r(i)\right)$
		is a $d$-approximate potential function, i.\,e., its value decreases in every $d$-improvement step.
		Consider a state $S$ and a player $i$ who improves her costs by a factor of more than $d$ by deviating to the strategy $S_i'$:

		\begingroup \thickmuskip=0mu \medmuskip=0mu \thinmuskip=0mu
		\begin{align*}
			&\Phi(S'_i,S_{-i})-\Phi(S)\\
			&=\sum_{r\in R}\sum_{i=1}^{n_r(S')}\left(\alpha\cdot \ell_r(i)+(1-\alpha)\cdot e_r(i)\right) -\sum_{r\in R}\sum_{i=1}^{n_r(S)}\left(\alpha\cdot \ell_r(i)+(1-\alpha)\cdot e_r(i)\right)\\
			&\leq\alpha\cdot \sum_{r\in S_i'}\ell_r(S')+(1-\alpha)\cdot\max_{r\in S_i'}e_r(S')+(1-\alpha)\cdot \sum_{r\in S_i'}e_r(S')-(1-\alpha)\cdot\max_{r\in S_i'}e_r(S')\\
			&-\left(\alpha\cdot \sum_{r\in S_i}\ \ell_r(S)+(1-\alpha)\cdot\max_{r\in S_i}e_r(S)\right)\\
			&\leq c_i(S')-c_i(S)+\sum_{r\in S_i'}(1-\alpha)\cdot e_r(S')-(1-\alpha)\cdot\max_{r\in S_i'}e_r(S')\\
			&\leq c_i(S')-c_i(S)+(d-1)\cdot c_i(S')<c_i(S')-d\cdot c_i(S')+(d-1)\cdot c_i(S')=0
		\end{align*}
		\endgroup

		\item
		We show that 
		$\Phi(S)=\sum_{r\in R}\sum_{i=1}^{n_r(S)}\ell_r(i)^2$
		is a  $\sqrt{d}$-approximate potential function.
		Consider a state $S$ that is minimizing $\Phi$ and player $i$ who deviates to the strategy $S_i'$. Note that 
		$\Phi(S)-\Phi(S_i',S{-i})=\sum_{r\in S}\ell_r(S)^2-\sum_{r\in S'}\ell_r(S')^2.$
		Hence, $\sum_{r\in S}\ell_r(S)^2\leq\sum_{r\in S'}\ell_r(S')^2$.
		
		\begin{align*}
			c_i(S_i',S{-i})&=\sum_{r\in S_i'}\alpha_i\cdot \ell_r(S')+(1-\alpha_i)\max_{r\in S_i'}\ell_r(S')\\
			&\geq \alpha_i\cdot\left(\sum_{r\in S_i'} \ell_r(S')^2\right)^\frac{1}{2}+(1-\alpha_i)\cdot \left(\max_{r\in S_i'}\ell_r(S')^2\right)^\frac{1}{2}\\
			&\geq \alpha_i\cdot\left(\frac{1}{\sqrt{d}}\cdot \sum_{r\in S_i}\ell_r(S)\right)+(1-\alpha_i)\cdot \left(\frac{1}{d}\cdot \left(\sum_{r\in S_i}\ell_r(S)^2\right)\right)^\frac{1}{2}\\
			&\geq \frac{\alpha_i}{\sqrt{d}}\cdot \sum_{r\in S_i}\ell_r(S)+\frac{1-\alpha_i}{\sqrt{d}}\cdot \max_{r\in S_i}\ell_r(S)= \frac{1}{\sqrt{d}}\cdot c_i(S).
		\end{align*}

		\item 
		We argue that
		$\Phi(S)=
		\sum_{r\in R}\sum_{i=1}^{n_r(S)}\ell_r(i)$ is an approximate potential function.
		If the latency and bottleneck cost functions are identical for each resource, we use the fact that the cost of the bottleneck resource is at least as high as the average latency cost, i.\,e.,
		$\max_{r\in S_i}e_r(S)\geq \frac{1}{|S_i|}\sum_{r\in S_i}\ell_r(S)\geq \frac{1}{d}\sum_{r\in S_i}\ell_r(S),$
		which implies
		$c_i(S)=\alpha\cdot \sum_{r\in S_i}\ell_r(S)+(1-\alpha)\cdot \max_{r\in S_i} \ell_r(S) \linebreak \geq \left(\alpha+\frac{1-\alpha}{d}\right)\sum_{r\in S_i}\ell_r(S).$

		Consider a state $S$ and a player $i$ who improves her costs by a factor of more than $\beta=\frac{d}{\alpha\cdot(d-1)+1}$ by deviating to the strategy $S_i'$.

		\begingroup \thickmuskip=0mu \medmuskip=0mu \thinmuskip=0mu
		\begin{align*}
			\Phi(S_i',S_{-i})-\Phi(S)&\leq c_i(S')-c_i(S)+\sum_{r\in S_i'}(1-\alpha)\cdot \ell_r(S')-(1-\alpha)\cdot\max_{r\in S_i'}\ell_r(S')\\
			&\leq c_i(S')-c_i(S)+(1-\alpha)\cdot\left(1-\frac{1}{d}\right)\cdot\sum_{r\in S_i'} \ell_r(S')\\
			&\leq c_i(S')-c_i(S)+(1-\alpha)\cdot\left(1-\frac{1}{d}\right)\cdot\frac{c_i(S')}{\alpha+\frac{1-\alpha}{d}}\\
			&=\left(1+\frac{(1-\alpha)\cdot \frac{d-1}{d}}{\alpha+\frac{1-\alpha}{d}}\right)c_i(S')-c_i(S)\\
			&=\left(1+\frac{d-1}{\alpha\cdot(d-1)+1}-1+\frac{1}{\alpha\cdot(d-1)+1}\right)c_i(S')-c_i(S)\\
			&=\frac{d}{\alpha\cdot(d-1)+1}c_i(S')-c_i(S)<0
			&\,\qed
		\end{align*}
		\endgroup
	\end{enumerate}
	
\end{proof}

We remark that the $\beta$ given in the fourth case of Theorem~\ref{approximatetheorem} is bounded from above by $\frac{1}{\alpha}$. However, if $\alpha$ is close to $0$, the bound of $\sqrt{d}$ derived for general games with $\ell_r=e_r$ for all resources, but without restrictions on the preference values, may give a better approximation guarantee.

\section{Conclusions}
We studied a new class of games in which players seek to minimize the sum of latency costs, the maximum of bottleneck costs, or a combination thereof. As a promising avenue for future work it would be interesting to consider other types of cost aggregation. This would be useful in scenarios with heterogeneous players with different interests in which the resources represent not only links in a network but also servers, routers, or any network functions in general.

\bibliographystyle{splncs03}

\end{document}